\DeclareMathOperator*{\argmin}{arg\,min}
\DeclareMathOperator*{\argmax}{arg\,max}
\newtheorem{prop}{Proposition}
\newcommand{\xbf}{\mathbf{x}}
\newcommand{\ybf}{\mathbf{y}}
\newcommand{\wbf}{\mathbf{w}}
\newcommand{\vbf}{\mathbf{v}}
\newcommand{\sbf}{\mathbf{s}}
\newcommand{\ubf}{\mathbf{u}}
\newcommand{\ebf}{\mathbf{e}}
\newcommand{\zbf}{\mathbf{z}}
\newcommand{\Abf}{\mathbf{A}}
\newcommand{\Jbf}{\mathbf{J}}
\newcommand{\Ibf}{\mathbf{I}}
\newcommand{\Dbf}{\mathbf{D}}
\newcommand{\Bbf}{\mathbf{B}}
\newcommand{\Hbf}{\mathbf{H}}
\newcommand{\row}{\mathsf{R}}
\newcommand{\col}{\mathsf{C}}
\newcommand{\Null}{\text{Null}}
\begin{document}
\title{Local Convergence of an AMP Variant \\ to the LASSO Solution in Finite Dimensions}

%%% Several authors with up to three affiliations:
% \author{%
%   \IEEEauthorblockN{Yanting Ma}
%   \IEEEauthorblockA{Mitsubishi Electric Research Laboratories (MERL)\\Email: yma@merl.com}
%   \and
%   \IEEEauthorblockN{Min Kang and Jack W. Silverstein}
%   \IEEEauthorblockA{Department of Mathematics\\
%                     North Carolina State University\\ 
%                     Email: \{mkang2, jack\}@ncsu.edu}
% }

%%% Many authors with many affiliations:
\author{%
\IEEEauthorblockN{Yanting Ma\IEEEauthorrefmark{1},
Min Kang\IEEEauthorrefmark{2},
Jack W. Silverstein\IEEEauthorrefmark{2},
Dror Baron\IEEEauthorrefmark{3}}
\IEEEauthorblockA{\IEEEauthorrefmark{1}%
                    Mitsubishi Electric Research Laboratories,
                    yma@merl.com}
  \IEEEauthorblockA{\IEEEauthorrefmark{2}%
                    Department of Mathematics,
                    North Carolina State University,
                    \{mkang2, Jack\}@ncsu.edu}
  \IEEEauthorblockA{\IEEEauthorrefmark{3}%
                    Department of Electrical and Computer Engineering,
                    North Carolina State University,
                    barondror@ncsu.edu}
\thanks{D. Baron was supported in part by NSF EECS 1611112.
This work was completed while Y. Ma was with North Carolina State University.}}

\maketitle
\thispagestyle{empty}

%%%%%%
%% Abstract: 
%% If your paper is eligible for the student paper award, please add
%% the comment "THIS PAPER IS ELIGIBLE FOR THE STUDENT PAPER
%% AWARD." as a first line in the abstract. 
%% For the final version of the accepted paper, please do not forget
%% to remove this comment!
%%
\begin{abstract}
A common sparse linear regression formulation is $\ell_1$ regularized least squares, which is also known as least absolute shrinkage and selection operator (LASSO).
Approximate message passing (AMP) has been proved to asymptotically achieve the LASSO solution when the regression matrix has independent and identically distributed (i.i.d.) Gaussian entries in the sense that the averaged per-coordinate $\ell_2$ distance between the AMP iterates and LASSO solution vanishes as the signal dimension goes to infinity before the iteration number. 
However, in finite dimensional settings, characterization of AMP iterates in the limit of large iteration number has not been established. 
In this work, we propose an AMP variant by including a parameter that depends on the largest singular value of the regression matrix. The proposed algorithm can also be considered as a primal dual hybrid gradient algorithm with adaptive stepsizes.  
We show that whenever the AMP variant converges, it converges to the LASSO solution for arbitrary finite dimensional regression matrices. 
Moreover, we show that our AMP variant is locally stable around the LASSO solution 
under the condition that 
the LASSO solution is unique and that the regression matrix is drawn from a continuous distribution. 
Our local stability result implies that 
when the regression matrix is large and has i.i.d. random entries, the original AMP, which is a special case of the proposed AMP variant, is locally stable around the LASSO solution. 
\end{abstract}

\section{Introduction}
\label{sec:intro}

Least absolute shrinkage and selection operator (LASSO) is a common formulation for sparse linear regression, which is defined as the optimization problem:
\begin{equation}
\xbf^*\in \argmin_{\xbf\in\mathbb{R}^N} \{F(\xbf):=\frac{1}{2} \| \ybf - \Abf \xbf \|_2^2 + \gamma \| \xbf \|_1\},
\label{eq:lasso}
\end{equation}
where $\Abf\in\mathbb{R}^{n\times N}$ is the regression matrix, $\mathbf{y}\in\mathbb{R}^n$ is the data vector, $\gamma>0$ is the regularization parameter, and $\|\cdot\|_p$, for $p=1,2$, denotes the $\ell_p$ norm. 
%Applications of LASSO include model selection and image processing, since natural images are usually sparse in some transform domain.
While numerous standard convex optimization algorithms  such as the class of proximal gradient methods~\cite{combettes2005signal,Nesterov2007,Daubechies.etal2004,BeckTeboulle2009}, alternating direction method of multipliers (ADMM)~\cite{Boyd.etal2011}, and primal dual hybrid gradient (PDHG)~\cite{Esser2010,ChambollePock2011,HeYuan2012} can be used to solve \eqref{eq:lasso}, it is of both theoretical and practical interest to study approximate message passing (AMP) for solving \eqref{eq:lasso}, since AMP was initially introduced by Donoho et al.~\cite{Donoho.etal2009} as a LASSO solver and usually enjoys fast empirical convergence when it converges. 

Existing theoretical convergence analyses of standard optimization algorithms and
of AMP are considered in different problem settings. Specifically, the quantity of interest to optimization algorithms is usually $\lim_{t\to\infty}\|\xbf^t - \xbf^*\|_2$, where $\xbf^t$ is the estimate at the $t^{th}$ iteration of an iterative algorithm, for any fixed and finite $n$ and $N$. It is usually assumed in the AMP framework that the data vector $\ybf$ is generated according to a linear system, $\ybf = \Abf\xbf_0 + \wbf$, with some underlying ground-truth $\xbf_0$ and noise $\wbf$. Under these assumptions, the analysis of AMP shows that when $\Abf$ has independent and identically distributed (i.i.d.) Gaussian entries, the quantity $\lim_{N\to\infty}\frac{1}{N}\|\xbf^t - \xbf_0\|_2$  with $\frac{n}{N}\to c\in (0,\infty)$ converges to a deterministic number predicted by a scalar recursion referred to as state evolution with probability one~\cite{BayatiMontanari2011}; this is later extended to a large deviation result~\cite{RushV2018}. 
For the class of right-orthogonal invariant random $\Abf$, vector-AMP~\cite{rangan2019vector} (see also orthogonal AMP~\cite{ma2017orthogonal}) also enjoys a rigorous state evolution analysis as $n,N\to\infty$ and $\frac{n}{N}\to c$.
For the LASSO problem, Bayati and Montanari~\cite{BayatiMontanari2012} have proven the convergence of AMP iterates to the LASSO solution $\xbf^*$ in the sense that
$\lim_{t\to\infty} \lim_{N\to\infty} \frac{1}{N} \|\xbf^t - \xbf^*  \|_2^2 =0$ with probability one, which has also been extended to a large deviation result in recent work~\cite{Rush2020}. However, this large deviation result only holds for $t=O\left(\frac{\log N}{\log\log N}\right)$.\footnote{The big O notation $O(\cdot)$ means that there exists an $N_0\in\mathbb{N}$ and a positive real number $B$ such that $t(N)\leq B \frac{\log N}{\log\log N}$ for all $N\geq N_0$.} Therefore, the convergence of AMP for finite $N$ as $t\to\infty$ is still unknown.
In fact, using matrices with i.i.d. Gaussian entries and following the calibration method proposed in \cite{BayatiMontanari2012} for choosing the threshold of the soft-thresholding function at each AMP iteration, we performed 2000 trials of Monte Carlo simulations with $N=2000, n=1000$, and AMP never converged to the LASSO solution in terms of $\ell_2$ error.

The connection between AMP and standard convex optimization algorithms has enabled the design of AMP variants that have convergence guarantees for more practical settings such as non-Gaussian finite dimensional matrices. Most such results have been developed in a more general algorithmic framework known as generalized AMP (GAMP)~\cite{Rangan2011}. In the context of solving optimization problems, GAMP considers objective functions of the form $\sum_{i=1}^{n} g([\Abf\xbf]_i)+\sum_{i=1}^N f(x_i)$. Consider now that $\Abf$ is arbitrary and finite dimensional. When both $g$ and $f$ are quadratic functions,  damped GAMP~\cite{rangan2019convergence}, which defines the current iterate as a convex combination of the current estimate and the iterate from the previous iteration, has global convergence guarantees. When $g$ and $f$ are strictly convex and twice continuously differentiable, assuming that the derivatives of the nonlinear functions used in each GAMP iteration are bounded within the open interval $(0,1)$, damped GAMP with fixed stepsize~\cite{rangan2019convergence} is proved to be locally stable around the equilibrium point, and ADMM-GAMP~\cite{Rangan.etal2017}, which combines an ADMM inner loop within each GAMP iteration, is guaranteed to achieve global convergence. 

The interpretation of AMP as a PDHG algorithm was first mentioned by Rangan et al.~\cite{rangan2019convergence}, which has inspired the current work.
The differences between our work and the local stability analysis in~\cite{rangan2019convergence} are as follows:\\
%\begin{enumerate}
$\bullet$ %\item 
The objective function of LASSO is non-differentiable, hence is not covered in~\cite{rangan2019convergence}.\\
$\bullet$ %\item 
Instead of damping the iterates while keeping the stepsizes fixed as in \cite{rangan2019convergence}, we do not make changes to updates of the iterates but design a stepsize updating schedule based on AMP.\\  
$\bullet$ %\item 
The result in \cite{rangan2019convergence} holds only when the stepsize is fixed over all iterations, which loses the main advantage of AMP over standard optimization algorithms for fast convergence, whereas our result allows keeping the structure of the stepsize updating schedule of AMP. Numerical results in Section \ref{sec:numerical}  
show that the number of iterations required for our algorithm to converge to the LASSO solution is orders of magnitude smaller than widely used optimization algorithms.
%\end{enumerate}
%(\emph{i}) The objective function of LASSO is non-differentiable, hence is not covered in~\cite{rangan2019convergence}. (\emph{ii}) Instead of damping the iterates while keeping the stepsizes fixed as in \cite{rangan2019convergence}, we do not make changes to the updates of the iterates but design a stepsize updating schedule based on AMP.  (\emph{iii}) The result in \cite{rangan2019convergence} holds only when the stepsize is fixed over all iterations, which loses the main advantage of AMP over standard optimization algorithms for fast convergence, whereas our result allows keeping the structure of the stepsize updating schedule of AMP. 

% The remainder of the paper is organized as follows. Section \ref{sec:algo} introduces our proposed AMP variant. Section \ref{sec:local} provides local stability analysis of our algorithm around its equilibrium point, which is the LASSO solution.
% Section \ref{sec:numerical} presents simulation results that compare our algorithm with AMP and other widely used optimization algorithms.
% Finally, Section \ref{sec:conclusion} concludes the paper.
\section{Proposed Algorithm}
\label{sec:algo}

Let $g:\mathbb{R}^n\to\mathbb{R}$ and $f:\mathbb{R}^N\to\mathbb{R}$ be defined as
\begin{equation}
g(\ubf):=\frac{1}{2}\|\ybf-\ubf\|_2^2 \quad \text{and}\quad f(\xbf):=\gamma \|\xbf\|_1,
\label{eq:def_fg}
\end{equation}
respectively.
The idea of the class of PDHG algorithms is to write the minimization problem $\inf_{\xbf\in\mathbb{R}^N} \left\{ g(\Abf\xbf) + f(\xbf)\right\}$ as a saddle-point problem using the fact that the function $g$ defined above is convex, closed, and proper, thus $g=(g^{*})^*$~\cite{Rockafellar1970}, where $g^*$ is the convex-conjugate of $g$ defined as
\begin{equation}
g^*(\sbf):=\sup_{\ubf\in\mathbb{R}^n} \left\{ \langle \sbf,\ubf\rangle - g(\ubf)\right\}.
\label{eq:def_conv_conj}
\end{equation}
% Plugging $g(\Abf\xbf)=(g^*)^*(\Abf\xbf)=\sup_{\sbf\in\mathbb{R}^n} \left\{ \langle \Abf\xbf, \sbf \rangle - g^*(\sbf)\right\}$ into the minimization problem, 
With $g(\Abf\xbf)=(g^*)^*(\Abf\xbf)=\sup_{\sbf\in\mathbb{R}^n} \left\{ \langle \Abf\xbf, \sbf \rangle - g^*(\sbf)\right\}$,
we obtain the saddle-point problem
\begin{equation}
\inf_{\xbf\in\mathbb{R}^N}\sup_{\sbf\in\mathbb{R}^n}  F(\sbf,\xbf),
\label{eq:saddle}
\end{equation}
where $F(\sbf,\xbf):=\langle \sbf, \Abf\xbf \rangle - g^*(\sbf) + f(\xbf)$.
PDHG solves \eqref{eq:saddle} by alternating between the estimation of $\sbf$ and $\xbf$ as $\sbf^{t+1}=\argmax_{\sbf\in\mathbb{R}^n} \left\{F(\sbf,\xbf^t)+\frac{1}{2\tau_s^t}\|\sbf - \sbf^t\|_2^2\right\}$ and $\xbf^{t+1}=\argmin_{\xbf\in\mathbb{R}^N} \left\{ F(\sbf^{t+1},\xbf)+\frac{1}{2\tau_x^t}\|\xbf - \xbf^t\|_2^2\right\}$, respectively, which is equivalent to
\begin{equation}
{\footnotesize
\begin{split}
\sbf^{t+1} &=\argmin_{\sbf\in\mathbb{R}^n} \left\{ g^*(\sbf) + \frac{1}{2\tau_s^t}\|\sbf-(\sbf^t + \tau_s^t \Abf\xbf^t)\|_2^2\right\},\\
\xbf^{t+1} &=\argmin_{\xbf\in\mathbb{R}^N} \left\{ f(\xbf)+ \frac{1}{2\tau_x^t}\|\xbf-(\xbf^t - \tau_x^t \Abf^T \sbf^{t+1})\|_2^2\right\}.
\end{split}
}
\label{eq:primal_dual_general}
\end{equation}
In the above, the stepsizes $\tau_x^t$ and $\tau_s^t$ can stay constant for all iterations or be updated at every iteration. One feature of PDHG algorithms is that each equilibrium point is a saddle-point of \eqref{eq:saddle}. This can be explained as follows. Let $(\widehat{\xbf},\widehat{\sbf},\widehat{\tau}_x,\widehat{\tau}_s)$ be an equilibrium point of the algorithm \eqref{eq:primal_dual_general}, then
\begin{equation*}
\begin{split}
0 &\in \partial_\sbf \Big( g^*(\sbf) + \frac{1}{2\widehat{\tau}_s}\|\sbf- (\widehat{\sbf} + \widehat{\tau}_s \Abf \widehat{\xbf})\|_2^2\Big)\Big\vert_{\sbf=\widehat{\sbf}}\,,\\
0 &\in \partial_\xbf \Big( f(\xbf) + \frac{1}{2\widehat{\tau}_x}\|\xbf- (\widehat{\xbf} - \widehat{\tau}_x \Abf^T \widehat{\sbf})\|_2^2\Big)\Big\vert_{\xbf = \widehat{\xbf}}\,,
\end{split}
\end{equation*} 
where $\partial_\ubf$ denotes sub-differential with respect to $\ubf$.
The above implies that $\Abf\widehat{\xbf}\in\partial g^*(\widehat{\sbf})$ and $-\Abf^T \widehat{\sbf}\in\partial f(\widehat{\xbf})$, which is the necessary and sufficient condition for $(\widehat{\xbf},\widehat{\sbf})$ to be a saddle-point of \eqref{eq:saddle}. 

The choice of stepsizes is crucial for the convergence of an optimization algorithm. 
AMP can be interpreted as a special case of PDHG with an adaptive stepsize updating schedule~\cite{rangan2019convergence}. Specifically,
%As it is also mentioned in \cite{rangan2019convergence} that AMP is a special case of PDHG with an adaptive stepsize updating schedule. To see this,
let 
\begin{equation}
\tau_s^{t} = \frac{1}{\tau_x^{t}-1},\quad \tau_x^{t+1} =1+\frac{d^t}{c}\tau_x^t,
\label{eq:def_tau_s}
\end{equation}
where $d^t=\|\xbf^{t+1}\|_0/N$ with $\|\xbf^{t+1}\|_0$ (the $\ell_0$ quasi-norm) denoting the number of nonzero coordinates 
of $\xbf^{t+1}$. By \eqref{eq:def_fg} and \eqref{eq:def_conv_conj}, we have $g^*(\sbf)=\langle \ybf,\sbf\rangle+\frac{1}{2}\|\sbf\|_2^2$. For easy comparison, we use the same notation for the soft-thresholding function (proximal operator for the $\ell_1$-norm) as in \cite{BayatiMontanari2012}. 
For any $\theta>0$, $\ubf\in\mathbb{R}^N$, define
\begin{equation}
\eta\left(\ubf;\theta\right) :=\argmin_{\xbf\in\mathbb{R}^N} \|\xbf\|_1 + \frac{1}{2\theta}\|\xbf - \ubf\|_2^2.
\label{eq:def_eta}
\end{equation}
Then \eqref{eq:primal_dual_general} can be written as
\begin{equation}
\begin{split}
\sbf^{t+1} &=\frac{1}{\tau_x^t}\left(\Abf\xbf^t - \ybf\right) + \left(1-\frac{1}{\tau_x^t}\right)\sbf^t,\\
\xbf^{t+1} &=\eta\left(\xbf^t -\tau_x^t \Abf^T \sbf^{t+1};\gamma\tau_x^t\right).
\end{split}
\label{eq:AMP1}
\end{equation}
Let $\zbf^{t}:=-\tau_x^{t}\sbf^{t+1}$ for all $t\geq 1$ and notice from \eqref{eq:def_tau_s} that $(\tau_x^t-1)/\tau_x^{t-1}=d^{t-1}/c$. 
We can see that \eqref{eq:AMP1} matches the AMP algorithm (see \cite{Donoho.etal2009} and \cite{BayatiMontanari2012}), but with a different choice for the threshold of the soft-thresholding function.
We emphasize that the choice of the threshold in \cite{BayatiMontanari2012} does not guarantee that AMP will converge to the LASSO solution for finite dimensional problems, whereas the choice in \eqref{eq:AMP1} guarantees that whenever \eqref{eq:AMP1} converges, it converges to the LASSO solution for arbitrary finite dimensional $\Abf$.

%We notice that 
\begin{comment}
In many optimization algorithms, the stepsize depends on the largest singular value of $\Abf$, whereas the algorithm defined in \eqref{eq:primal_dual_general} with $\tau_s^t$ and $\tau_x^t$ updated according to \eqref{eq:def_tau_s}, which is equivalent to AMP \eqref{eq:AMP1}, does not depend on the largest singular value. Therefore, in order to have an algorithm that is more robust than AMP with arbitrary finite dimensional $\Abf$ while retaining the fast convergence of AMP, we introduce a parameter $e$ to \eqref{eq:def_tau_s} that depends on the largest singular value of $\Abf$ while keeping the general structure of the updating schedule in \eqref{eq:def_tau_s}. Specifically, choosing $0 < e < \min\{1,4/(\sigma_\text{max}^2(\Abf) + 2)\}$, where $\sigma_\text{max}(\Abf)$ is the largest singular value of $\Abf$, we modify \eqref{eq:def_tau_s} as
\end{comment}

In many optimization algorithms, the stepsize depends on 
$\sigma_\text{max}(\Abf)$,
the largest singular value of $\Abf$, whereas the algorithm defined in \eqref{eq:primal_dual_general} with $\tau_s^t$ and $\tau_x^t$ updated according to \eqref{eq:def_tau_s}, which is equivalent to AMP \eqref{eq:AMP1}, does not depend on
$\sigma_\text{max}(\Abf)$. Therefore, in order to have an algorithm that is more robust than AMP with arbitrary finite dimensional $\Abf$ while retaining the fast convergence of AMP, we introduce a parameter $e$ to \eqref{eq:def_tau_s} that depends on 
$\sigma_\text{max}(\Abf)$ while keeping the general structure of the updating schedule in \eqref{eq:def_tau_s}. Specifically, choosing $0 < e < \min\{1,4/(\sigma_\text{max}^2(\Abf) + 2)\}$, we modify \eqref{eq:def_tau_s} as
\begin{equation}
\tau_s^{t} = \frac{e}{\tau_x^{t}-e},\quad \tau_x^{t+1} =1+\frac{d^t}{c}\tau_x^t.
\label{eq:def_tau_s_new}
\end{equation}
Such a choice of $e$ ensures local stability of our proposed algorithm
(see Section \ref{sec:local}).
Our proposed AMP variant is \eqref{eq:primal_dual_general} with the stepsize updating schedule defined in \eqref{eq:def_tau_s_new}. Similar to the derivation of \eqref{eq:AMP1} from \eqref{eq:primal_dual_general} and \eqref{eq:def_tau_s}, we can write the proposed AMP variant as in Algorithm~\ref{algo:amp_variant}.

\begin{algorithm}
\caption{Proposed AMP Variant}
\label{algo:amp_variant}
\textbf{Input:} $\Abf$, $\ybf$, $0<e<\min\{1,4/(\sigma_\text{max}^2(\Abf) + 2)\}$, $t_{\text{max}}$\\
\textbf{Initialization:} $\xbf^0$, $\sbf^0$, $\tau_x^0$
\begin{algorithmic}
\For{$0\leq t \leq t_{\text{max}}$}
\begin{equation}
\begin{split}
\sbf^{t+1} &=\frac{e}{\tau_x^t}\left(\Abf\xbf^t - \ybf\right) + \left(1-\frac{e}{\tau_x^t}\right)\sbf^t\\
\xbf^{t+1} &=\eta\left(\xbf^t -\tau_x^t \Abf^T \sbf^{t+1};\gamma\tau_x^t\right)\\
\tau_x^{t+1} &=1+\frac{d^t}{c}\tau_x^t \quad\text{with } d^t=\|\xbf^{t+1}\|_0/N
\end{split}
\label{eq:our_algo}
\end{equation}
\EndFor
\end{algorithmic}
\textbf{Output:} $\xbf^{t_{\text{max}}}$
\end{algorithm} 
\section{Local Stability Analysis}
\label{sec:local}

We now study the local stability of Algorithm~\ref{algo:amp_variant} around its equilibrium point, which is the LASSO solution. We first discuss conditions under which our analysis is valid, and then show that Algorithm~\ref{algo:amp_variant} is locally stable under these conditions.

\subsection{Assumptions} 
\label{subsec:assumptions}

When $\Null(\Abf)$, the null space of $\Abf$, contains nonzero components, the objective function is not strictly convex in $\xbf$, hence there may be multiple solutions. Conditions for the uniqueness of the LASSO solution have been studied by Tibshirani~\cite{Tibshirani2013}, which states that a sufficient~\cite[Lemma 2]{Tibshirani2013} and necessary~\cite[Lemma 16]{Tibshirani2013} condition for $\eqref{eq:lasso}$ to admit a unique solution is that 
$\Null(\col_{\mathcal{E}}(\Abf))=\{\mathbf{0}\}$, where $\mathcal{E}=\{i\in\{1,\ldots,N\}\,\vert\, |[\Abf^T(\ybf-\Abf\xbf^*)]_i| = \lambda\}$ and $\col_{\mathcal{E}}(\Abf)$ is the submatrix of $\Abf$ formed by deleting the $i^{th}$ column of $\Abf$ for all $i\not\in \mathcal{E}$. 
By the first order optimality condition for~\eqref{eq:lasso}, we have $\Abf^T(\ybf-\Abf\xbf^*)=\lambda \vbf$ with $\vbf\in\partial_{\xbf}(\|\xbf\|_1)\vert_{\xbf=\xbf^*}$. Note that $v_i=\text{sign}(x_i^*)$ if $x_i^*\neq 0$ and $v_i\in[-1,1]$ if $x_i^*=0$.
Let $K=\{i\in\{1,\ldots,N\}\vert x^*_i\neq 0\}$, thus $K\subset\mathcal{E}$.
% $\Null(\col_K(\Abf))=\{\mathbf{0}\}$, where $K:=\{i\in\{1,\ldots,N\}\vert x^*_i\neq 0\}$ with $x^*_i$ the $i^{th}$ coordinate of $\xbf^*$, and $\col_{K}(\Abf)$ is the submatrix of $\Abf$ formed by deleting the $i^{th}$ column of $\Abf$ for all $i\not\in K$. 
Then the necessary condition implies that when the LASSO solution is unique, we have that $|K|\leq \min \{n,N\}$; this is a condition that we need to prove our result.
A more explicit sufficient condition for uniqueness in the almost sure sense is also provided in~\cite[Lemma 4]{Tibshirani2013}: if entries of $\Abf$ are drawn from a continuous probability distribution on $\mathbb{R}^{n\times N}$, then the LASSO solution is unique with probability one regardless of the dimension of $\Abf$.

Note that with $\eta(\cdot)$ being the soft-thresholding function \eqref{eq:def_eta}, the definition of $d^t$ in \eqref{eq:our_algo} can be written as
\begin{equation*}
d^t = \left\vert  \{  i\in\{1,\ldots,N\} : \vert   [\xbf^t -\tau_x^t \Abf^T \sbf^{t+1}]_i \vert > \gamma\tau_x^t   \} \right\vert,
\end{equation*}
where we can further replace $\sbf^{t+1}$ by $\frac{e}{\tau_x^t}\left(\Abf\xbf^t - \ybf\right) + \left(1-\frac{e}{\tau_x^t}\right)\sbf^t$, so that $d^t$ only depends on the iterates at the $t^{th}$ iteration. Similarly, the update of $\xbf^{t+1}$ in \eqref{eq:our_algo} can also be written as a function of iterates at the $t^{th}$ iteration only. Therefore, letting $\vbf^t\in\mathbb{R}^{n+N+1}$ be defined as $\vbf^t:=[\sbf^t; \xbf^t; \tau_x^t]$ for all $t\geq 0$, we have that \eqref{eq:our_algo} defines a nonlinear operator $G:\mathbb{R}^{n+N+1}\to\mathbb{R}^{n+N+1}$ such that $\vbf^{t+1}=G(\vbf^t)$. Note that $G$ is differentiable at $[\sbf^t; \xbf^t; \tau_x^t]$ except when there exists an $i\in\{1,\ldots, N\}$, such that
\begin{equation*}
[\xbf^t - \Abf^T\left(e(\Abf\xbf^t-\ybf) + (\tau_x^t - e)\sbf^t\right)]_i = \pm\gamma\tau_x^t,
\end{equation*}
which has probability zero if $\Abf$ obeys a continuous distribution.
To summarize, we make the following two assumptions on components in \eqref{eq:lasso}: 
\begin{enumerate}
\item The matrix $\Abf$ is drawn from a continuous probability distribution on $\mathbb{R}^{n \times N}$.
\item The regularization parameter $\gamma>0$.
\end{enumerate}

\subsection{Stability around the Equilibrium Point}
\label{subsec:local_analysis}

Having clarified our assumptions, we now prove our main result, which is the local stability guarantee of Algorithm~\ref{algo:amp_variant} as stated in the following proposition.
\begin{prop}
\label{prop:local_stability}
Consider the LASSO problem defined in \eqref{eq:lasso} and suppose that the conditions of Section~\ref{subsec:assumptions} are satisfied. Then Algorithm~\ref{algo:amp_variant} is stable around its equilibrium point with probability one.
\end{prop}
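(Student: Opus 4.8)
The plan is to linearize the fixed-point iteration $\vbf^{t+1}=G(\vbf^t)$ about its equilibrium $\widehat{\vbf}=[\widehat{\sbf};\xbf^*;\widehat{\tau}_x]$, where $\xbf^*$ is the (unique) LASSO solution, and to show that the spectral radius of the Jacobian $\Jbf:=DG(\widehat{\vbf})$ is strictly below one; this is the standard sufficient condition for local (asymptotic) stability of a discrete-time map. The first step is to verify, using Assumption~1, that $G$ is differentiable at $\widehat{\vbf}$ with probability one: by the genericity afforded by the continuous distribution of $\Abf$, no coordinate of the soft-thresholding argument sits exactly on a kink at the equilibrium, so the active set of the linearized $\xbf$-update coincides with the support $K=\{i:x_i^*\neq 0\}$, and the derivative of $\eta(\cdot;\gamma\widehat{\tau}_x)$ is the diagonal projector $\Dbf=\mathrm{diag}(\mathbf{1}_K)$.

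Next I would exploit the structure of $\Jbf$ to decouple the $\tau_x$ coordinate. Since near $\widehat{\vbf}$ the count $d^t=\|\xbf^{t+1}\|_0/N$ is locally constant and equal to $\widehat d=|K|/N$, the last row of $\Jbf$ vanishes in the $(\sbf,\xbf)$ columns and equals $\widehat d/c$ in the $\tau_x$ column; hence $\widehat d/c<1$ is an eigenvalue that decouples, and the remaining spectrum is that of the $(n+N)\times(n+N)$ block $\mathbf{M}$ governing $(\sbf,\xbf)$ at frozen $\tau_x=\widehat{\tau}_x$. Because $\Dbf$ annihilates the off-support components of $\xbf$ in one step, those directions contribute only zero eigenvalues, and the nonzero spectrum of $\mathbf{M}$ equals that of the reduced matrix on $(\sbf,\xbf_K)\in\mathbb{R}^{n+|K|}$,
\[
\widetilde{\mathbf{M}}=\begin{pmatrix} (1-\alpha)\Ibf_n & \alpha\Abf_K \\ -\widehat{\tau}_x(1-\alpha)\Abf_K^T & \Ibf_{|K|}-e\Abf_K^T\Abf_K \end{pmatrix},\qquad \alpha:=\frac{e}{\widehat{\tau}_x},
\]
where $\Abf_K$ is the submatrix of columns of $\Abf$ indexed by $K$. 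Uniqueness of the LASSO solution (Section~\ref{subsec:assumptions}) forces $|K|\leq\min\{n,N\}$ and $\Abf_K$ to have full column rank, so $\Bbf:=\Abf_K^T\Abf_K$ is positive definite.

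The core computation is to bound $\rho(\widetilde{\mathbf{M}})$. Eliminating the $\sbf$-component from the eigen-equation $\widetilde{\mathbf{M}}[\boldsymbol{\xi};\boldsymbol{\zeta}]=\lambda[\boldsymbol{\xi};\boldsymbol{\zeta}]$ and diagonalizing $\Bbf$ reduces, for each eigenvalue $\mu>0$ of $\Bbf$, to the scalar quadratic $\lambda^2-(\beta+1-e\mu)\lambda+\beta=0$ with $\beta:=1-\alpha\in(0,1)$, where the cross terms cancel precisely because $\widehat{\tau}_x\alpha=e$. By the Schur--Cohn criterion, both roots lie strictly inside the unit disk iff $\beta<1$ and $|\beta+1-e\mu|<1+\beta$; the former holds, while the latter amounts to $0<e\mu<4-2\alpha$. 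Since $\mu\leq\sigma_{\text{max}}^2(\Abf_K)\leq\sigma_{\text{max}}^2(\Abf)$ and $\alpha\leq e$ (because $\widehat{\tau}_x\geq 1$), the choice $e<4/(\sigma_{\text{max}}^2(\Abf)+2)$ gives $e\mu\leq e\sigma_{\text{max}}^2(\Abf)<4-2e\leq 4-2\alpha$, so the inequality is strict. Hence every eigenvalue of $\widetilde{\mathbf{M}}$, and therefore of $\Jbf$, has modulus strictly less than one, yielding local stability with probability one.

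I expect the main obstacle to be the rigorous treatment of the non-differentiability of the soft-thresholding map: one must argue that, almost surely over the continuous law of $\Abf$, the equilibrium avoids the kinks so that the active set entering the Jacobian is exactly $K$ and $\Abf_K$ has full column rank. This is where Assumption~1 and the uniqueness characterization of Section~\ref{subsec:assumptions} are essential, and it is the step most likely to hide subtleties (for instance, confirming that $\widehat{\tau}_x$ is finite, i.e.\ $\widehat d/c<1$). The eigenvalue analysis itself is routine once the scalar quadratic is isolated, but establishing the clean cancellation $\widehat{\tau}_x\alpha=e$ and correctly tracking the dependence on $e$ and $\sigma_{\text{max}}(\Abf)$ is what makes the admissible range of $e$ explicit.
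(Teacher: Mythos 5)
Your proposal is correct and follows essentially the same route as the paper: both linearize around the equilibrium, reduce the Jacobian's spectrum (after splitting off the $\widehat d/c$ eigenvalue and the zero eigenvalues from the inactive coordinates) to the scalar quadratic $\lambda^2+(e\mu-\alpha-1)\lambda+\alpha=0$ indexed by the eigenvalues $\mu$ of $\Abf_K^T\Abf_K$, invoke uniqueness of the LASSO solution for positive definiteness of that Gram matrix, and bound $\mu$ by $\sigma_{\text{max}}^2(\Abf)$ to obtain the same admissible range for $e$. The only (cosmetic) difference is that you certify the roots lie in the unit disk via the Schur--Cohn/Jury criterion, which neatly replaces the paper's explicit case analysis of complex versus real roots through the functions $h_1$ and $h_2$.
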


\begin{proof}
Suppose that $G$ is differentiable around the equilibrium point $\widehat{\vbf}$. Then the local stability of $G$ around $\widehat{\vbf}$ is determined by the largest eigenvalue (in modulus)  of the Jacobian matrix $\Jbf$ of $G$ evaluated at $\widehat{\vbf}$. The expression for $\Jbf$ is
\begin{equation}
\Jbf=\left[\begin{matrix}
\left(1- e/\widehat{\tau}_x\right)\Ibf_{n}                          & (e/\widehat{\tau}_x)\Abf                                           &  \mathbf{0}_{n\times 1}\\
\left(e-\widehat{\tau}_x\right) \widehat{\Dbf}\Abf^T    &\widehat{\Dbf}\left(\Ibf_{N}- e\Abf^T\Abf\right)   & -\widehat{\Dbf}\Abf^T\widehat{\sbf}\\
\mathbf{0}_{1\times n}                                                          & \mathbf{0}_{1\times N}                                              & \widehat{d}/c
\end{matrix}\right],
\label{eq:jacobian}
\end{equation}
where $\widehat{\Dbf}$ is a diagonal matrix defined as $\widehat{D}_{ii}=\mathbb{I}\{[\widehat{\xbf}]_i\neq 0\}$,
$\mathbb{I}$ is the indicator function, $\Ibf_n$ is the $n\times n$ identity matrix, and $\mathbf{0}_{n\times m}$ is an $n\times m$ zeros matrix.
%with all zero-valued coordinates.  
%In what follows, 
Below, we show that with an appropriate choice of $e$, the eigenvalue of $\Jbf$ with the largest modulus is within the unit circle of the complex plane. 

Let $\alpha=1-e/\widehat{\tau}_x$ and let $\vert \Abf \vert$ denote the determinant of a matrix $\Abf$. Then
\begin{align}
&\left\vert \Jbf\!-\!\lambda \Ibf \right\vert \!\overset{(a)}{=}\!\left(\widehat{d}/c - \lambda \right)\! \left\vert\begin{matrix}
\left(\alpha -\lambda \right)\Ibf_{n}                                   &(e/\widehat{\tau}_x)\Abf\\
\left(e-\widehat{\tau}_x\right) \widehat{\Dbf}\Abf^T    & \!\!\widehat{\Dbf}-\lambda \Ibf_{N}- e\widehat{\Dbf} \Abf^T\Abf 
\end{matrix}   \right\vert\nonumber\\
%&=\left(\frac{\widehat{d}}{c} - \lambda \right) \left\vert\begin{matrix}
%\left(\alpha -\lambda \right)\Ibf_{n} &\frac{e}{\widehat{\tau}_x}A\\
%0_{N\times n}    & \widehat{\Dbf}-\lambda \Ibf_{N}- e\widehat{\Dbf} A^TA - \frac{e}{\widehat{\tau}_x}\frac{e-\widehat{\tau}_x}{\alpha-\lambda}  \widehat{\Dbf}A^TA
%\end{matrix}   \right\vert\nonumber\\
&\overset{(b)}{=}\!\left(\widehat{d}/c- \lambda \right)\! \left\vert\begin{matrix}
\left(\alpha -\lambda \right)\Ibf_{n}              & (e/\widehat{\tau}_x)\Abf\\
\mathbf{0}_{N\times n}                                                     & \widehat{\Dbf}-\lambda\Ibf_{N}+ \lambda e/(\alpha - \lambda) \widehat{\Dbf}\Abf^T\Abf
\end{matrix}   \right\vert\nonumber\\
&= \left( \widehat{d}/c- \lambda \right)(\alpha-\lambda)^{n-N}\nonumber\\
&\qquad\qquad \cdot\left\vert (\alpha-\lambda)\widehat{\Dbf}-(\alpha-\lambda)\lambda \Ibf_{N}+ \lambda e \widehat{\Dbf}\Abf^T\Abf\right\vert,
\label{eq:mainEq}
\end{align}
where step $(a)$ follows by expanding the last row of $\Jbf-\lambda \Ibf$, and step $(b)$ follows by subtracting $\frac{e-\widehat{\tau}_x}{\alpha-\lambda}\widehat{\Dbf}\Abf^T$ times the first row from the second row and noticing that 
\begin{comment}
\begin{equation*}
\begin{split}
e+\frac{e}{\widehat{\tau}_x}\frac{e-\widehat{\tau}_x}{\alpha-\lambda}&=e\left(1+\frac{1}{\alpha-\lambda}\left(\frac{e}{\widehat{\tau}_x}-1\right)\right)\\
&=e\left(1-\frac{\alpha}{\alpha-\lambda}\right)=\frac{-\lambda e}{\alpha-\lambda}.
\end{split}
\end{equation*}
\end{comment}
{\footnotesize{ %% using footnotesize did it
\[
e\!+\!\frac{e}{\widehat{\tau}_x}\frac{e-\widehat{\tau}_x}{\alpha-\lambda}
=
e\!\left(1\!+\!\frac{1}{\alpha-\lambda}\left(\frac{e}{\widehat{\tau}_x}\!-\!1\right)\right)
=
e\!\left(1\!-\!\frac{\alpha}{\alpha-\lambda}\right)=\frac{-\lambda e}{\alpha-\lambda}.
\]
}}
To calculate $\left\vert (\alpha-\lambda)\widehat{\Dbf}-(\alpha-\lambda)\lambda \Ibf_{N}+ \lambda e \widehat{\Dbf}\Abf^T\Abf\right\vert$, we first introduce some notation. For a matrix $\Bbf\in\mathbb{R}^{N\times N}$ and index set $K\subset \{1,2,...,N\}$, let $[\Bbf]_K\in\mathbb{R}^{|K|\times |K|}$ denote the submatrix of $\Bbf$ obtained by eliminating the $i^{th}$ row and $i^{th}$ column of $\Bbf$ for all $i\not\in K$. Moreover, let $\row_K(\Bbf)$ (resp. $\col_K(\Bbf)$) denote the submatrix of $\Bbf$ formed by deleting the $i^{th}$ row (resp. column) of $\Bbf$ for all $i\not\in K$.

Letting $\Bbf=(\alpha-\lambda)\widehat{\Dbf}-(\alpha-\lambda)\lambda\Ibf_{N}+ \lambda e \widehat{\Dbf}\Abf^T\Abf$, we have 
\begin{equation*}
\row_{\{i\}}(\Bbf)=\begin{cases}
-\lambda(\alpha-\lambda)\ebf_i^T,&\text{if }i\in K^c\\
(1-\lambda)(\alpha-\lambda)\ebf_i^T + e\lambda \row_{\{i\}}(\Abf^T\Abf), &\text{if }i\in K
\end{cases},
\end{equation*}
where all but the $i^{th}$ coordinates of $\ebf_i\in\mathbb{R}^N$ are zero and the $i^{th}$ coordinate is 1, and $K=\{i\in\{1,...,N\}\vert \widehat{D}_{ii} = 1\}$.
By expanding the $i^{th}$ row of $\Bbf$ for all $i\not\in K$, we have
\begin{equation*}
\left\vert \Bbf\right\vert \!=\! (-\lambda(\alpha - \lambda))^{N(1-\widehat{d})}\!\left\vert(\alpha-\lambda)(1-\lambda)\Ibf_{N\widehat{d}} + \lambda e [\Abf^T\Abf]_K\right\vert.
\end{equation*}
Plugging the above into \eqref{eq:mainEq}, we have
\begin{align}
\left\vert \Jbf -\lambda \Ibf \right\vert &= \left( \widehat{d}/c - \lambda \right)(\alpha-\lambda)^{n-N}(-\lambda(\alpha - \lambda))^{N(1-\widehat{d})}\nonumber\\
&\qquad\cdot\left\vert(\alpha-\lambda)(1-\lambda)\Ibf_{N\widehat{d}} + \lambda e [\Abf^T\Abf]_K\right\vert\nonumber\\
&=(-1)^{N(1-\widehat{d})}\left(\widehat{d}/c - \lambda \right)\lambda^{N(1-\widehat{d})}(\alpha-\lambda)^{n-N\widehat{d}}\nonumber\\
&\qquad\cdot\left\vert(\alpha-\lambda)(1-\lambda)\Ibf_{N\widehat{d}} + \lambda e [\Abf^T\Abf]_K\right\vert.\label{eq:roots1}
\end{align}

Let $\Hbf=[\Abf^T\Abf]_K$, we now need to solve for $\lambda$ in the following equation:
\begin{equation}
\left\vert (\alpha-\lambda)(1-\lambda)\Ibf_{N\widehat{d}} + \lambda e \Hbf \right\vert=0. 
\label{eq:rest_eig}
\end{equation}
First, we check that $\lambda=0$ is not a solution to \eqref{eq:rest_eig}. Plugging $\lambda=0$ into \eqref{eq:rest_eig}, we have 
\begin{equation*}
\left\vert (\alpha-\lambda)(1-\lambda)\Ibf_{N\widehat{d}} + \lambda e \Hbf \right\vert=|\alpha \Ibf_{N\widehat{d}}|= \left(1-e/\widehat{\tau}_x\right)^{N\widehat{d}}>0,
\end{equation*}
where the last inequality holds because $e\in (0,1]$ and $\widehat{\tau}_x>1$, hence $1-e/\widehat{\tau}_x>0$. Now that $\lambda\neq 0$, we divide both sides of \eqref{eq:rest_eig} by $(e\lambda)^{N\widehat{d}}$:
\begin{equation}
\left\vert \frac{(\alpha-\lambda)(1-\lambda)}{\lambda e}\Ibf_{N\widehat{d}} +  \Hbf \right\vert=0.
\label{eq:roots2}
\end{equation}
Therefore, $\lambda$ is a solution to \eqref{eq:rest_eig} if and only if $-\frac{(\alpha-\lambda)(1-\lambda)}{\lambda e}$ is an eigenvalue of $\Hbf$. Let $\text{sp}(\Hbf)$ denote the spectrum (i.e., set of eigenvalues) of $\Hbf$, and define
\begin{equation}
m:=\min_{\lambda\in\text{sp}(\Hbf)} \lambda,\qquad\text{and}\qquad M:=\max_{\lambda\in\text{sp}(\Hbf)} \lambda.
\label{eq:def_m_M}
\end{equation}
Note that $\Hbf=[\Abf^T\Abf]_K=(\col_K(\Abf))^T\col_K(\Abf)$. By condition of the uniqueness of LASSO solution, we have that $\col_K(\Abf)$ is non-singular, hence $m>0$. Let $u=-\frac{(\alpha-\lambda)(1-\lambda)}{\lambda e}$, then
\begin{equation}
\lambda^2 + (eu-\alpha - 1)\lambda + \alpha = 0.
\label{eq:lambda_quad}
\end{equation}
Let $b:=eu -\alpha - 1$, and we solve \eqref{eq:lambda_quad} for $\lambda$.
When $|b|<2\sqrt{\alpha}$, we have complex roots $\lambda=\frac{1}{2}(-b\pm i\sqrt{4\alpha - b^2})$, hence $|\lambda|=\frac{1}{2}(\sqrt{b^2 + 4\alpha-b^2})=\sqrt{\alpha}<1$. 
When $|b|\geq 2\sqrt{\alpha}$, we have real roots. Define
\begin{equation}
\begin{split}
h_1(b)&:=\frac{1}{2}(-b+\sqrt{b^2-4\alpha}),\\
h_2(b)&:=\frac{1}{2}(-b-\sqrt{b^2-4\alpha}).
\end{split}
\label{eq:def_h1h2}
\end{equation}
Notice that 
\begin{equation}
\begin{split}
h_1'(b) &= \frac{1}{2}\left(-1 + \frac{b}{\sqrt{b^2-4\alpha}}\right)
\begin{cases}
> 0,&\text{if }b>2\sqrt{\alpha}\\
< 0,&\text{if }b\leq -2\sqrt{\alpha}
\end{cases},\\
h_2'(b) &= \frac{1}{2}\left(-1 - \frac{b}{\sqrt{b^2-4\alpha}}\right)
\begin{cases}
< 0,&\text{if }b\geq 2\sqrt{\alpha}\\
> 0,&\text{if }b<-2\sqrt{\alpha}
\end{cases}.
\end{split}
\label{eq:h1h2_derivative}
\end{equation}
Also notice that $h_1(b)\uparrow 0$ as $b\uparrow\infty$, and $h_2(b)\downarrow 0$ as $b\downarrow -\infty$. The graph of $h_1$ and $h_2$ as a function of $b$, respectively, is shown in Fig. \ref{fig:graph_lambda_real}. It can be seen that
\begin{equation*}
\max(|h_1(b)|,|h_2(b)|)=\begin{cases}
h_1(b),&\text{if } b\leq -2\sqrt{\alpha}\\
-h_2(b), &\text{if } b\geq 2\sqrt{\alpha}
\end{cases}.
\end{equation*}

\begin{figure}[t!]
\centering
\includegraphics[width=0.45\textwidth]{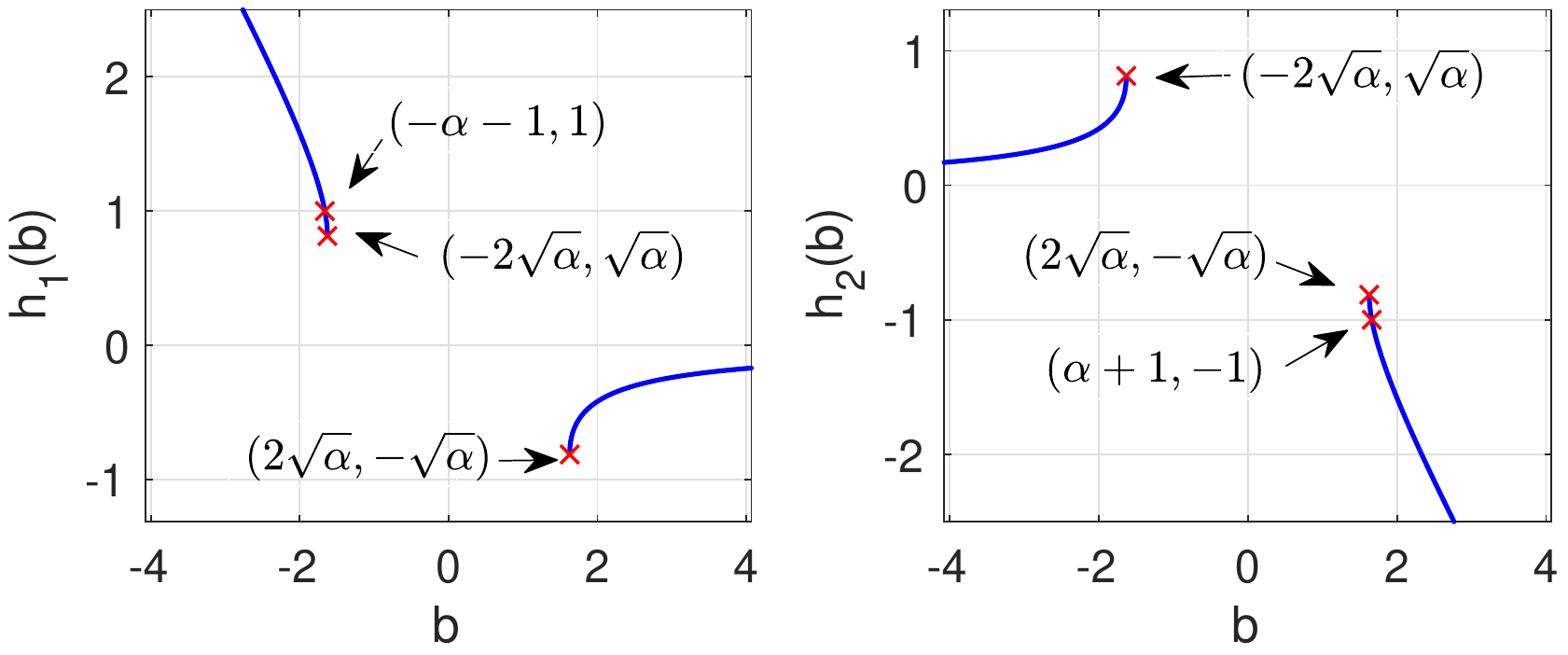}
\caption{Graph of $h_1$ and $h_2$ as a function of $b$, where a specific $e$ is chosen for demonstration purposes.}
\label{fig:graph_lambda_real}
\end{figure}

First consider $b\leq-2\sqrt{\alpha}$. Recall that $b=eu-\alpha-1>em-\alpha-1>-\alpha-1$, since $e>0$ and $m>0$. Notice from \eqref{eq:def_h1h2} and \eqref{eq:h1h2_derivative} that $h_1(-\alpha-1)=1$, $h_1(-2\sqrt{\alpha})=\sqrt{\alpha}<1$, and that $h_1(b)$ is monotone decreasing when $b \leq -2\sqrt{\alpha}$. Therefore, we have that $|h_1(b)|<1$, when $b\leq-2\sqrt{\alpha}$.

Next consider $b \geq 2\sqrt{\alpha}$. 
Notice that $h_2(1+\alpha)=-1$ and that $h_2(b)$ is monotone decreasing when $b\geq 2\sqrt{\alpha}$. 
Therefore, in order to have $|h_2(b)|<1$, we need $b=eu-\alpha-1<1+\alpha, \forall u\in\text{sp}(\Hbf)$. This condition is satisfied if we let $e<\frac{2(1+\alpha)}{M}$.  Recall that $\alpha=1-e/\widehat{\tau}_x$. By \eqref{eq:our_algo}, we have $1/\widehat{\tau}_x = 1- \widehat{d}/c$. Combining the above analysis on $h_1$ and $h_2$, it follows that the condition on the parameter $e$ for all eigenvalues of $\Jbf$ to be within the unit circle of the complex plane is 
\begin{equation}
 \quad 0<e<\min\left\{1,\frac{4}{M+ 2(1-\widehat{d}/c)}\right\}.
\label{eq:e_bound}
\end{equation}

The upper-bound for $e$ in \eqref{eq:e_bound} is tight. However, it depends on the equilibrium point. It is desirable to have a condition on $e$ that does not depend on knowledge about the equilibrium point, so that an appropriate value for $e$ can be chosen before running the algorithm \eqref{eq:our_algo}. Notice that $\Hbf=[\Abf^T\Abf]_K$ is a principal submatrix of the symmetric matrix $\Abf^T\Abf$. The interlacing property of eigenvalues implies that $M\leq L:=\max_{\lambda\in\text{sp}(\Abf^T\Abf)}\lambda$. Moreover, the condition for the uniqueness of the LASSO solution implies that $\widehat{d}/c \in [0,1]$.
Therefore, local stability is guaranteed if
\begin{equation}
0< e \leq \min\{1, 4/(L+2)\}.
\label{eq:e_bound_weaker}
\end{equation}
\end{proof}

\subsection{Random Matrices with i.i.d. Entries} 
In the special case where the matrix $\Abf \in\mathbb{R}^{n\times N}$ is the upper left corner of a doubly infinite array\footnote{That is, we have an array $\{X_{ij}\}, i=1,2,\ldots; j=1,2,\ldots$ and 
$\Abf=(X_{ij}), i=1,2,\ldots, n; j=1,2,\ldots, N$.} of i.i.d. zero-mean random variables with finite fourth moment and is normalized such that the variance is $1/n$, the asymptotic largest singular value of $\col_K(\Abf)\in\mathbb{R}^{n\times N\widehat{d}}$, where $N\widehat{d}<n$, is $1+\sqrt{N\widehat{d}/n}=1+\sqrt{\widehat{d}/c}$ with probability one~\cite{yin1988limit}. It follows that the denominator of the upper-bound in \eqref{eq:e_bound} is
\begin{equation*}
M+2(1-\widehat{d}/c) \!=\! 1+\widehat{d}/c+2\sqrt{\widehat{d}/c} + 2 - 2\widehat{d}/c=3+2\sqrt{\widehat{d}/c}-\widehat{d}/c.
\end{equation*}
Let $x=\widehat{d}/c$, hence $x\in [0,1]$. Define $f(x)=2\sqrt{x}-x$ and notice that $f$ is monotone increasing on $[0,1]$. Therefore, $f(x)<f(1)=1$, which implies that $M+2(1-\widehat{d}/c)\leq 4$. That is, local stability for large zero-mean random matrices with variance $1/n$ is guaranteed by setting $e=1$, which, as mentioned before, makes Algorithm~\ref{algo:amp_variant} coincide with the original AMP \eqref{eq:AMP1}, as seen in \cite{Donoho.etal2009} and \cite{BayatiMontanari2012}.

\section{Numerical Demonstration}
\label{sec:numerical}

To demonstrate the efficiency of our proposed AMP variant, we compare it with the original AMP that uses the calibration method proposed in \cite{BayatiMontanari2012}, a PDHG algorithm with a fixed stepsize that guarantees convergence (see \cite{HeYuan2012}), and a popular convex optimization algorithm, fast iterative shrinkage and thresholding algorithm (FISTA)~\cite{BeckTeboulle2009}. Because our proposed algorithm is inspired by AMP and depends on a parameter $e$, we call it eAMP and choose $e$ as the upper bound in \eqref{eq:e_bound_weaker}.
%according to \eqref{eq:e_bound_weaker}. 
In addition, we include results for eAMP with $e=1$, which is of the same form as AMP but the threshold for the thresholding function at each iteration is different from that in \cite{BayatiMontanari2012}.

In all the simulations, the problem dimension is %set to be 
$N=2000$, $n=1000$. The data vector $\ybf$ is obtained by $\ybf=\Abf \xbf_0 + \wbf$, where entries of $\wbf$ are independent realizations of a Gaussian distribution with mean zero and variance $\sigma_w^2$ and entries of $\xbf_0$ are independent realizations of a Bernoulli(0.1)-Uniform[-1,1] distribution (i.e., $X_0=BU$ with $B\sim\text{Bernoulli}(0.1)$ and $U\sim\text{Uniform}[-1,1]$). The value of $\sigma_w^2$ satisfies $10\log_{10}\left(\frac{\|\Abf\xbf_0\|_2^2}{n\sigma_w^2}\right)=25$, i.e., the signal-to-noise ratio is 25dB. All tested algorithms are initialized with an all-zero vector. Since FISTA and PDHG have theoretical convergence guarantees, we present their results only for comparison of empirical convergence speed, hence we sometimes stop them early when the convergence speed comparison is clear. 

\begin{figure}
\centering
\begin{subfigure}{0.45\textwidth}
\centering
\includegraphics[width=\textwidth]{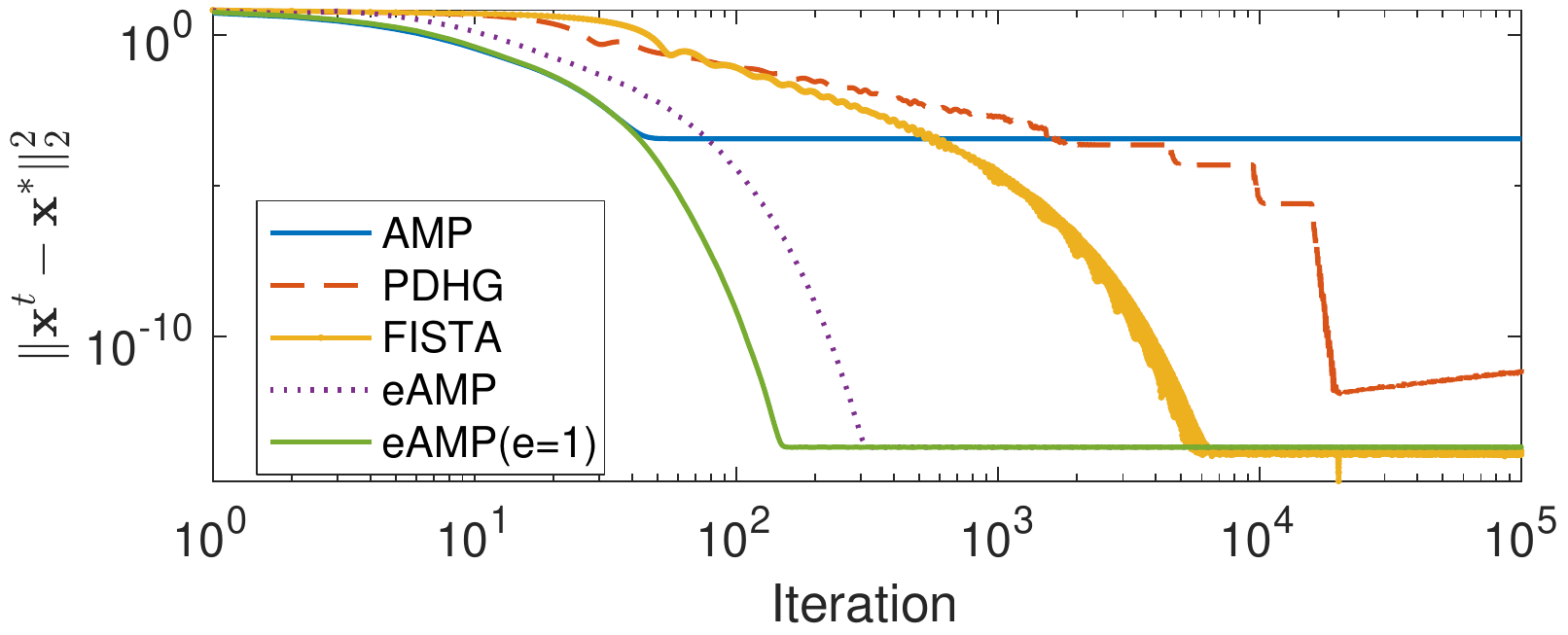}
\caption{Entries of the matrix $\Abf$ are drawn independently from a zero-mean Gaussian distribution.}
\label{fig:sim_standard}
\end{subfigure}
\begin{subfigure}{0.45\textwidth}
\centering
\includegraphics[width=\textwidth]{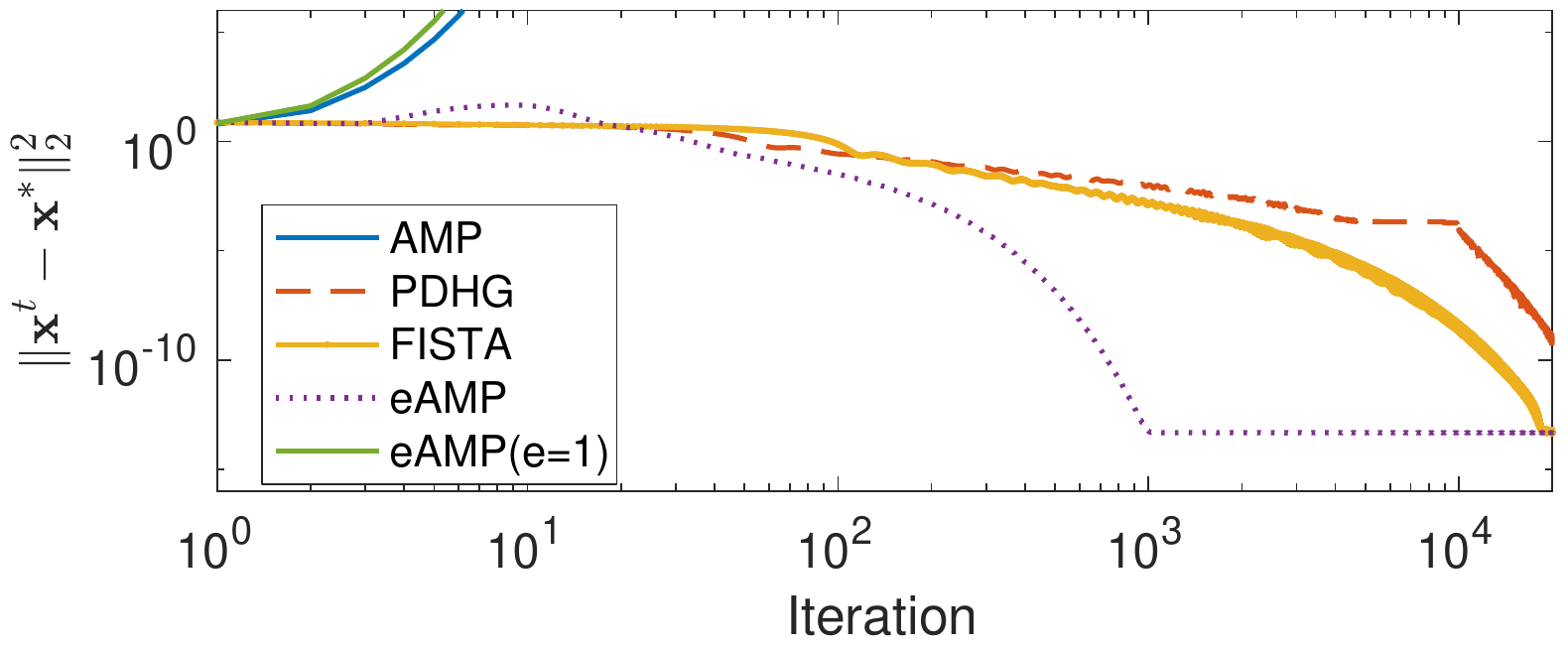}
\caption{Rows of the matrix $\Abf$ are drawn independently from a zero-mean multivariate Gaussian distribution.}
\label{fig:sim_correlated}
\end{subfigure}
\caption{Comparison of different algorithms for solving the LASSO problem \eqref{eq:lasso}.}
\end{figure}

For the first set of simulations, we use matrices $\Abf$ whose entries are i.i.d.\ zero-mean Gaussian,
%independent realizations of a zero-mean Gaussian distribution, 
which is the case studied in \cite{BayatiMontanari2012} in the limit as $N,n\to\infty$. The simulation results are shown in Fig. \ref{fig:sim_standard}. We notice that while AMP seems to have converged, it does not converge to the LASSO solution $\xbf^*$, whereas eAMP with both choices of $e$ has converged to the LASSO solution. Moreover, the empirical convergence speed (in terms of number of iterations) of eAMP is much faster than that of FISTA or PDHG with our choice of $e$, though a smaller $e$ may lead to a slower convergence.

For the second set of simulations, we use matrices $\Abf$ whose rows are independent realizations of a zero-mean multivariate Gaussian distribution, where diagonal entries of the covariance matrix have value $1/n$ and off-diagonal entries have value $0.01/n$.
%where the off-diagonal entries of the covariance matrix are nonzero.
The simulation results are shown in Fig. \ref{fig:sim_correlated}. In this case, AMP and eAMP with the inappropriate choice of $e=1$ have diverged, whereas eAMP with $e$ as the upper bound in
%chosen according to 
\eqref{eq:e_bound_weaker} has converged to the LASSO solution and requires far fewer iterations than FISTA and PDHG.
%its empirical convergence speed is much faster than that of FISTA and PDHG.

While our analysis in Section \ref{sec:local} only guarantees local stability for eAMP, the encouraging simulation results suggest that a global convergence result might be possible; we leave the global convergence analysis for future work. 
\section{Conclusion}
\label{sec:conclusion}
In this paper, we proposed an AMP variant (Algorithm \ref{algo:amp_variant})  for solving the LASSO problem \eqref{eq:lasso}. 
Unlike the work in \cite{BayatiMontanari2012} that analyzes the limiting behavior of AMP iterates as the iteration number goes to infinity for \emph{infinite} dimensional problems, we focused on \emph{finite} dimensional problems. 
Specifically, for any finite dimensional matrix $\Abf$, whenever our algorithm converges, it converges to the LASSO solution. This is not the case for AMP with finite dimensional $\Abf$ even when $\Abf$ has i.i.d. Gaussian entries, as shown in Fig. \ref{fig:sim_standard}. 
The proposed algorithm contains a parameter $e$ that depends on the largest singular value of $\Abf$. In Proposition \ref{prop:local_stability}, we provided conditions on $e$ under which the algorithm is locally stable around the LASSO solution with probability one when entries of $\Abf$ are drawn from a continuous distribution.
Finally, numerical results showed that the number of iterations required for our algorithm to converge is orders of magnitude smaller than optimization algorithms such as FISTA~\cite{BeckTeboulle2009} and PDHG~\cite{HeYuan2012}.

%%%%%%
%% To balance the columns at the last page of the paper use this
%% command:
%%
%\enlargethispage{-1.2cm} 
%%
%% If the balancing should occur in the middle of the references, use
%% the following trigger:
%%
%\IEEEtriggeratref{4}
%%
%% which triggers a \newpage (i.e., new column) just before the given
%% reference number. Note that you need to adapt this if you modify
%% the paper.  The "triggered" command can be changed if desired:
%%
%\IEEEtriggercmd{\enlargethispage{-20cm}}
%%
%%%%%%

\bibliographystyle{IEEEtran}
\bibliography{optimization}

% Generated by IEEEtran.bst, version: 1.14 (2015/08/26)
\begin{thebibliography}{10}
\providecommand{\url}[1]{#1}
\csname url@samestyle\endcsname
\providecommand{\newblock}{\relax}
\providecommand{\bibinfo}[2]{#2}
\providecommand{\BIBentrySTDinterwordspacing}{\spaceskip=0pt\relax}
\providecommand{\BIBentryALTinterwordstretchfactor}{4}
\providecommand{\BIBentryALTinterwordspacing}{\spaceskip=\fontdimen2\font plus
\BIBentryALTinterwordstretchfactor\fontdimen3\font minus
  \fontdimen4\font\relax}
\providecommand{\BIBforeignlanguage}[2]{{%
\expandafter\ifx\csname l@#1\endcsname\relax
\typeout{** WARNING: IEEEtran.bst: No hyphenation pattern has been}%
\typeout{** loaded for the language `#1'. Using the pattern for}%
\typeout{** the default language instead.}%
\else
\language=\csname l@#1\endcsname
\fi
#2}}
\providecommand{\BIBdecl}{\relax}
\BIBdecl

\bibitem{combettes2005signal}
P.~L. Combettes and V.~R. Wajs, ``Signal recovery by proximal forward-backward
  splitting,'' \emph{Multiscale Modeling \& Simulation}, vol.~4, no.~4, pp.
  1168--1200, 2005.

\bibitem{Nesterov2007}
Y.~Nesterov, ``Gradient methods for minimizing composite objective function,''
  \emph{{CORE} Report}, 2007.

\bibitem{Daubechies.etal2004}
I.~Daubechies, M.~Defrise, and C.~De~Mol, ``An iterative thresholding algorithm
  for linear inverse problems with a sparsity constraint,''
  \emph{Communications on Pure and Applied Mathematics: A Journal Issued by the
  Courant Institute of Mathematical Sciences}, vol.~57, no.~11, pp. 1413--1457,
  2004.

\bibitem{BeckTeboulle2009}
A.~Beck and M.~Teboulle, ``A fast iterative shrinkage-thresholding algorithm
  for linear inverse problems,'' \emph{SIAM J. Imaging Sci.}, vol.~2, no.~1,
  pp. 183--202, 2009.

\bibitem{Boyd.etal2011}
S.~Boyd, N.~Parikh, E.~Chu, B.~Peleato, J.~Eckstein \emph{et~al.},
  ``Distributed optimization and statistical learning via the alternating
  direction method of multipliers,'' \emph{Foundations and
  Trends{\textregistered} in Machine learning}, vol.~3, no.~1, pp. 1--122,
  2011.

\bibitem{Esser2010}
E.~Esser, X.~Zhang, and T.~F. Chan, ``A general framework for a class of first
  order primal-dual algorithms for convex optimization in imaging science,''
  \emph{SIAM J. Imaging Sci.}, vol.~3, no.~4, pp. 1015--1046, 2010.

\bibitem{ChambollePock2011}
A.~Chambolle and T.~Pock, ``A first-order primal-dual algorithm for convex
  problems with applications to imaging,'' \emph{J. Math. Imaging Vis.},
  vol.~40, no.~1, pp. 120--145, 2011.

\bibitem{HeYuan2012}
B.~He and X.~Yuan, ``Convergence analysis of primal-dual algorithms for a
  saddle-point problem: from contraction perspective,'' \emph{SIAM J. Imaging
  Sci.}, vol.~5, no.~1, pp. 119--149, 2012.

\bibitem{Donoho.etal2009}
D.~L. Donoho, A.~Maleki, and A.~Montanari, ``Message-passing algorithms for
  compressed sensing,'' \emph{Proc. Nat. Academy Sci. (PNAS)}, vol. 106,
  no.~45, pp. 18\,914--18\,919, 2009.

\bibitem{BayatiMontanari2011}
M.~Bayati and A.~Montanari, ``The dynamics of message passing on dense graphs,
  with applications to compressed sensing,'' \emph{IEEE Trans. Inf. Theory},
  vol.~57, no.~2, pp. 764--785, 2011.

\bibitem{RushV2018}
C.~Rush and R.~Venkataramanan, ``Finite sample analysis of approximate message
  passing algorithms,'' \emph{IEEE Trans. Inf. Theory}, 2018.

\bibitem{rangan2019vector}
S.~Rangan, P.~Schniter, and A.~K. Fletcher, ``Vector approximate message
  passing,'' \emph{IEEE Transactions on Information Theory}, vol.~65, no.~10,
  pp. 6664--6684, 2019.

\bibitem{ma2017orthogonal}
J.~Ma and L.~Ping, ``Orthogonal {AMP},'' \emph{IEEE Access}, vol.~5, pp.
  2020--2033, 2017.

\bibitem{BayatiMontanari2012}
M.~Bayati and A.~Montanari, ``The {LASSO} risk for {Gaussian} matrices,''
  \emph{IEEE Trans. Inf. Theory}, vol.~58, no.~4, pp. 1997--2017, 2012.

\bibitem{Rush2020}
C.~Rush, ``An asymptotic rate for the {LASSO} loss,'' in \emph{The 23rd
  International Conference on Artificial Intelligence and Statistics
  (AISTATS)}, ser. Proceedings of Machine Learning Research, vol. 108.\hskip
  1em plus 0.5em minus 0.4em\relax {PMLR}, Aug. 2020, pp. 3664--3673.

\bibitem{Rangan2011}
S.~Rangan, ``Generalized approximate message passing for estimation with random
  linear mixing,'' in \emph{Proc. IEEE Int. Symp. Inf. Theory (ISIT)}, 2011,
  pp. 2168--2172.

\bibitem{rangan2019convergence}
S.~Rangan, P.~Schniter, A.~K. Fletcher, and S.~Sarkar, ``On the convergence of
  approximate message passing with arbitrary matrices,'' \emph{IEEE Trans. Inf.
  Theory}, vol.~65, no.~9, pp. 5339--5351, 2019.

\bibitem{Rangan.etal2017}
S.~Rangan, A.~K. Fletcher, P.~Schniter, and U.~S. Kamilov, ``Inference for
  generalized linear models via alternating directions and {Bethe} free energy
  minimization,'' \emph{IEEE Trans. Inf. Theory}, vol.~63, no.~1, pp. 676--697,
  2017.

\bibitem{Rockafellar1970}
R.~T. Rockafellar, \emph{Convex Analysis}.\hskip 1em plus 0.5em minus
  0.4em\relax Princeton University Press, 1970.

\bibitem{Tibshirani2013}
R.~J. Tibshirani, ``The lasso problem and uniqueness,'' \emph{Electronic
  Journal of Statistics}, vol.~7, pp. 1456--1490, 2013.

\bibitem{yin1988limit}
Y.-Q. Yin, Z.-D. Bai, and P.~R. Krishnaiah, ``On the limit of the largest
  eigenvalue of the large dimensional sample covariance matrix,''
  \emph{Probability Theory and Related Fields}, vol.~78, no.~4, pp. 509--521,
  1988.

\end{thebibliography}
\end{document}